\theoremstyle{plain}
\newtheorem{theorem}{Theorem}[section]
\newtheorem{lemma}[theorem]{Lemma}
\newtheorem{claim}[theorem]{Claim}
\theoremstyle{definition}
\renewcommand{\l}{\ell}
\newcommand{\N}{\mathbb{N}}
\newcommand{\DEF}{\sl}       
\newcommand{\eps}{\varepsilon}
\renewcommand{\leq}{\leqslant}
\renewcommand{\le}{\leqslant}
\renewcommand{\geq}{\geqslant}
\renewcommand{\ge}{\geqslant}
\newcommand{\rep}{rank reduction problem}
\newcommand{\pvc}{partial vertex cover}
\newcommand{\mvc}{maximum vertex cover}
\begin{document}

\title{Reducing the rank of a matroid}

\author{Gwena\"el Joret}
\address{\newline D\'epartement d'Informatique
\newline Universit\'e Libre de Bruxelles
\newline Brussels, Belgium}
\email{gjoret@ulb.ac.be}

\author{Adrian Vetta}
\address{\newline Department of Mathematics and Statistics, and School of Computer Science 
\newline McGill University
\newline Montreal, Canada}
\email{vetta@math.mcgill.ca}

\sloppy
\maketitle

\begin{abstract}
We consider the {\em \rep} for matroids: 
Given a matroid $M$ and an integer $k$, find a minimum size subset of elements of $M$ whose removal reduces the rank of $M$ by at least $k$. 
When $M$ is a graphical matroid this problem is the minimum $k$-cut problem,
which admits a $2$-approximation algorithm.
In this paper we show that the {\rep}
for transversal matroids is essentially at least as hard to approximate 
as the densest $k$-subgraph problem. 
We also prove that, while the problem is easily solvable in polynomial time for partition matroids,
it is NP-hard when considering the intersection of two partition matroids. Our proof shows,
in particular, that the maximum vertex cover problem
is NP-hard on bipartite graphs, which answers an open problem of B.\ Simeone.
\end{abstract}

\section{Introduction}

Consider the well-known minimum $k$-cut problem: Given a graph $G$ and an integer $k$, find a minimum size
subset of edges whose removal increases the number of connected components by at least $k$. 
This problem is NP-hard, assuming $k$ is part of the input, and  
several $2$-approximation algorithms have been developed for it over the 
years~\cite{NR01, RS02, SV95, ZNI01}. 
Notice that the minimum $k$-cut problem has a simple formulation in terms of matroids: 
Given a graph $G$ and an integer $k$, find a minimum size subset of elements of the graphical matroid 
of $G$ whose removal reduces its rank by at least $k$.

This observation motivates the study of the {\em rank reduction} problem in other classes of matroids.
For example, is the {\rep} computationally hard and, if so, does it admit approximation algorithms with 
good approximation guarantees (as is the case for graphical matroids)? 
Moreover, as we will see, many fundamental problems can be formulated in this rank reduction framework.

In this paper, our focus is on the case of transversal matroids. 
First, we show that the {\rep} in transversal matroids is roughly at least as hard to approximate as the densest $k$-subgraph problem: 
Given a graph $G$ and an integer $k$, find a subset of $k$ vertices inducing a subgraph with a maximum number of edges. 
(Note that in our reduction the parameter $k$  is not necessarily the same one as in the {\rep}.) 
The densest $k$-subgraph problem can be approximated to within a factor of 
$O(n^{\frac14 +\epsilon})$ due to a recent breakthrough result of \cite{BCC10}.
Moreover, it is widely believed  \cite{BCG12,AC09,FDK01,FS97} that 
its hardness  is also close to this upper bound --
indeed, Bhaskara et al.~\cite{BCG12} present $n^{\Omega(1)}$ lower bounds for lift and project 
methods based upon the Sherali-Adams and the Lassere hierarchies.
We will show in particular that an $O(m^{\varepsilon})$-approximation algorithm
for the {\rep} on transversal matroids (where $m$ denotes the number of elements)
implies an $O(n^{4\varepsilon})$-approximation algorithm for the densest $k$-subgraph (where $n$ is the number of vertices).

Secondly, we prove that while the rank reduction 
problem is easily solvable in polynomial time for partition matroids (a special class of transversal matroids),
it is NP-hard when considering the intersection of two partition matroids. Our proof shows
in particular that the maximum vertex cover problem---also known as the partial vertex cover problem---is NP-hard on bipartite graphs. Here, one is given a graph and a positive integer $k$, and the goal is to 
find a set of $k$ vertices hitting as many edges as possible. 
The problem is obviously NP-hard on arbitrary graphs since it contains the vertex cover problem 
as a special case. Whether it remained NP-hard on bipartite graphs 
was an open problem of B.\ Simeone (see~\cite{EgresOpen}). 
We note that we learned after finishing this paper that Apollonio and Simeone~\cite{APsub} independently obtained a proof of this result.

\section{Preliminaries}
\label{sec:preliminaries}

In this section we give the necessary definitions and notations. 
All graphs and matroids in this paper are finite, 
and ``graph" will always mean an undirected simple 
graph. We use the shorthands $|G|$ and $||G||$ for the number of vertices and edges of 
a graph $G$, respectively. We denote by $\mu(G)$ the maximum size of a matching in $G$, which 
we call the {\DEF matching number} of $G$. 

A {\DEF matroid} $M$ is a pair $(E, \mathcal{I})$ where 
$\mathcal{I}$ is a family of subsets, called the {\DEF independent sets}, of the 
{\DEF ground set} $E$ satisfying the following three axioms: 
\begin{itemize}
\item the empty set $\varnothing$ is  independent; 
\item every subset of an independent is again independent, and
\item if $X$ and $Y$ are two independent sets with $|X| > |Y|$ then there exists 
$x \in X \setminus Y$ such that $Y\cup \{x\}$ is independent. 
\end{itemize}
The inclusion-wise maximal independent sets are the {\DEF bases} of the matroid $M$; 
as follows from the third axiom the bases all have the same cardinality. 
The {\DEF rank function} of $M$ is the function $r: 2^{E} \to \N$ that assigns 
to each subset $X$ of elements of $E$ the maximum size $r(X)$ of an independent set 
contained in $X$, called the {\DEF rank} of $X$. 
In particular, $r(E)$ is the cardinality of a basis of $M$, which is 
called the {\DEF rank} of $M$.

The {\DEF rank reduction} problem for matroids is defined as follows: 
Given a positive integer $k$ and a matroid on a set $A$ of elements 
with rank function $r$, the goal is to find a minimum size 
subset $X \subseteq A$ such that $r(A \setminus X) \leq r(A) - k$. 

For example consider the case of graphical matroids: Given a graph $G=(V, E)$, the
{\DEF graphical matroid} of $G$ is obtained by taking $E$ as ground set, and letting 
a subset $F$ of edges be independent if and only if the corresponding subgraph is acyclic. 
Here the rank reduction problem is the minimum $k$-cut problem.

As stated, here we study transversal matroids.
A bipartite graph $G$ with bipartition $(A, B)$ induces a matroid 
$M$ as follows: The matroid $M$ has $A$ as ground set, and 
$X\subseteq A$ is independent in $M$ if and only if there exists a matching of $G$ 
covering $X$. The fact that this is indeed a matroid is well-known; 
see for instance~\cite{Sch03}. Any matroid $M$ that can be obtained this way is 
called a {\DEF transversal matroid}, and the bipartite graph $G$ is said to be a 
{\DEF model} for $M$. Observe that, letting $r$ 
denote the rank function of $M$, the rank $r(X)$ of $X\subseteq A$ 
is equal to $\mu(G[X \cup B])$. 
Also note that being a transversal matroid is a hereditary property, 
in the sense that for each set $X\subseteq A$, 
taking the restriction $I \cap X$ of all independent sets $I$ yields a transversal matroid 
on ground set $X$. 

A special case of transversal matroids are partition matroids. Here we are
given a collection $E_{1}, \dots, E_{p}$ of disjoint sets and integers $d_{1}, \dots, d_{p}$ 
such that $0 \leq d_{i} \leq |E_{i}|$ for each $i \in \{1, \dots, k\}$. One can define 
a corresponding matroid with ground set $E:= E_{1} \cup \cdots \cup E_{k}$ by letting 
$X \subseteq E$ be independent if and only if $|E_{i} \cap X| \leq d_{i}$ for each 
$i \in \{1, \dots, k\}$. Such a matroid is called a {\DEF partition matroid}, with {\DEF model}
 $((E_{1}, d_{1}), \dots, (E_{p}, d_{p}))$. 
This corresponds to a transversal matroid on a
bipartite graph $G$ with bipartition $(E, B)$, where $B$ has $d_{i}$ vertices that are adjacent 
to all vertices in $E_{i}$, and none other, for each set $E_{i}$.  
Notice that partition matroids are also hereditary. 

Throughout, since we restrict ourselves to specific families of matroids, 
we assume that the matroid is given concisely and not given explicitly as a set system in input.
Specifically, a corresponding model of the matroid is provided: a bipartite graph for a transversal matroid, 
a graph for a graphical matroid, etc.

More generally, the rank reduction problem can be considered on the intersection 
of matroids. Given two matroids $M_{1}=(E, \mathcal{I}_{1})$ and $M_{2}=(E, \mathcal{I}_{2})$ with  
common ground set $E$, the intersection $M_{1}\cap M_{2}$ of $M_{1}$ and $M_{2}$ is the pair 
$(E, \mathcal{I})$ where $\mathcal{I}$ is the family of sets $X\subseteq E$ that are independent in 
both $M_{1}$ and $M_{2}$, which are said to be the independent sets of $M_{1}\cap M_{2}$.  
While the independence system $(E, \mathcal{I})$ is not necessarily a matroid anymore, 
it enjoys several of the nice properties of matroids (see~\cite{Sch03}).  
In particular, letting as before 
the {\DEF rank} $r(X)$ of $X\subseteq E$ be the maximum size of an independent set of 
$M_{1} \cap M_{2}$ contained in $X$, 
the rank $r(E)$ of $M_{1} \cap M_{2}$ 
can be computed in polynomial time given access to the rank functions $r_{1}$ and $r_{2}$ of 
$M_{1}$ and $M_{2}$, respectively, by a classical result of Edmonds (see~\cite{Sch03}).   
We examine the rank reduction problem for the intersection of two partition matroids in 
Section~\ref{sec:mvc}.

\section{Transversal Matroids}
\label{sec:transversal_matroids}

We start our investigation of the rank reduction problem 
with an easy observation, namely that the problem can be solved in polynomial time 
if the input matroid is a partition matroid.  

\begin{theorem}
\label{thm:partition}
The rank reduction problem can be solved in polynomial time on partition matroids. 
\end{theorem}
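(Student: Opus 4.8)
The plan is to turn the problem into an explicit combinatorial optimization over the parts of the partition matroid, and then to recognize that optimization as a min-knapsack instance whose parameters happen to be polynomially bounded.

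First I would record the rank function explicitly. For a partition matroid with model $((E_1,d_1),\dots,(E_p,d_p))$ and $Y\subseteq E$ one has $r(Y)=\sum_{i=1}^{p}\min(|E_i\cap Y|,d_i)$, so $r(E)=\sum_i d_i$. Writing $x_i:=|E_i\cap X|$ for the number of elements a candidate solution $X$ deletes from part $i$, the residual rank is $r(E\setminus X)=\sum_i \min(|E_i|-x_i,\,d_i)$, and hence the achieved rank drop is $\sum_i \max(0,\,x_i-s_i)$, where $s_i:=|E_i|-d_i\ge 0$ is the \emph{slack} of part $i$. The key structural observation is that deletions inside part $i$ are wasted until they exceed $s_i$: the first $s_i$ deletions leave the rank of that part untouched, and each of the next (at most $d_i$) deletions lowers the rank by exactly one.

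Next I would reformulate the task as a selection problem over parts. In an optimal solution each part is either untouched ($x_i=0$) or \emph{active}, in which case $x_i=s_i+a_i$ for some $1\le a_i\le d_i$ yielding reduction $a_i$; the cost of $X$ is then $\sum_{i\in S}(s_i+a_i)$, where $S$ is the set of active parts, subject to $\sum_{i\in S}a_i\ge k$. Because every unit of genuine reduction (past activation) costs exactly one deletion, for a fixed active set $S$ with $\sum_{i\in S}d_i\ge k$ the cheapest choice removes exactly $k$ units in total, at cost $\sum_{i\in S}s_i+k$. I would note that an optimal $S$ can always be taken with $|S|\le k$: inside any feasible $S$, greedily keeping the parts of largest $d_i$ until the total capacity first reaches $k$ uses at most $k$ parts and does not increase the slack, so distributing exactly $k$ units among the active parts stays feasible. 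Thus, up to the additive constant $k$, the \rep\ is equivalent to minimizing $\sum_{i\in S}s_i$ over subsets $S\subseteq\{1,\dots,p\}$ with $\sum_{i\in S}d_i\ge k$. (If no such $S$ exists, i.e.\ $k>r(E)$, the instance is infeasible.)

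Finally I would solve this covering (minimum-weight) knapsack: minimize the total slack subject to total capacity at least $k$. This is NP-hard in general, and this is precisely where the structure of the input saves us — the ``profits'' $d_i$ and the target $k$ are all bounded by $n:=|E|=\sum_i|E_i|$, the size of the explicitly given ground set. A standard dynamic program indexed by the achievable total capacity $c\in\{0,1,\dots,k\}$, storing the minimum total slack needed to reach capacity $c$, therefore runs in $O(p\cdot k)=O(n^2)$ time, and the optimal set $S$ (hence the deletion set $X$) is recovered by backtracking. The main point to get right is exactly this one: the reduction produces a knapsack-type problem, so the only reason the \rep\ is tractable on partition matroids is that all relevant magnitudes are polynomial in the input length; the remaining steps (the slack computation and the feasibility condition $k\le r(E)$) are routine.
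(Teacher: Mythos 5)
Your proposal is correct and follows essentially the same route as the paper: it computes the residual rank part by part, observes that an optimal solution pays the slack $|E_i|-d_i$ to ``activate'' each chosen part plus exactly $k$ further deletions, and reduces to a covering knapsack over the parts that is solvable by dynamic programming because all values are polynomially bounded. The only cosmetic difference is your explicit capacity-truncated DP and the (unneeded but harmless) remark that the active set can be taken of size at most $k$.
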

\begin{proof}
Let $M$ be a given partition matroid with model 
$((E_{1}, d_{1}), \dots, (E_{p}, d_{p}))$ and rank function $r$. Let $E:=E_{1} \cup \cdots \cup E_{p}$ 
denote the ground set of $M$. Observe that $r(E) = \sum_{i=1}^{p}d_{i}$. 
Let $k$ be the given parameter for the rank reduction problem on $M$.  
We may assume that $1 \leq k \leq r(E)$. Let $c_{i} := |E_{i}| - d_{i}$ for each $i\in \{1, \dots, p\}$. 
Given $X \subseteq E$, the rank $r(E \setminus X)$ of the set $E \setminus X$ is 
equal to $\sum_{i=1}^{p} \min(|E_{i} \setminus X|, d_{i})$. 

Let $X\subseteq E$ be such that $r(E \setminus X) \leq r(E) - k$. Moreover, assume $X$ is inclusion-wise 
minimal with this property. Then, for each $i\in \{1, \dots, p\}$, either 
$|E_{i} \cap X| \geq c_{i} + 1$ or $E_{i} \cap X = \varnothing$. Moreover, 
letting $J$ be the subset of indices $i\in \{1, \dots, p\}$ such that
$E_{i} \cap X \neq \varnothing$, we have that $\sum_{i\in J} d_{i} \geq k$ and 
$|X| = k + \sum_{i\in J} c_{i}$. 

Conversely, suppose $J' \subseteq \{1, \dots, p\}$ is such that 
$\sum_{i\in J'} d_{i} \geq k$. Then choosing arbitrarily 
$c_{i}$ elements of $E_{i}$, for each $i\in J'$, plus $k$ additional elements from 
$\bigcup_{i \in J'} E_{i}$ gives a set $X'$ 
with $|X'| = k + \sum_{i\in J'} c_{i}$ such that $r(E \setminus X') \leq r(E) - k$.
 
Therefore, computing an optimal solution to the rank reduction problem reduces to the problem of 
finding a subset $J \subseteq \{1, \dots, p\}$ such that $\sum_{i\in J} d_{i} \geq k$ 
and $\sum_{i\in J} c_{i}$ is minimum. Thus we obtain a knapsack problem.
Moreover, as $c_{i}$ and $d_{i}$ are at most $|E_i|$, they are of polynomial size when encoded in unary.
Thus the knapsack problem can be solved easily in polynomial time using dynamic programming. 
\end{proof}

While the rank reduction problem admits a simple polynomial-time algorithm on partition matroids, 
the problem turns out to be more difficult on the broader class of transversal matroids. 
In fact, up to some degree, the problem can be viewed as a generalization of the
{\DEF densest $k$-subgraph} problem. In the latter problem, one is 
given a graph $G$ and a positive integer $k$, and the aim is to find a subgraph $H$ of $G$
with $|H|=k$ and $||H||$ maximum. 
Towards this goal, we consider a closely related problem, the {\DEF minimum $t$-edge subgraph} problem:
Given a graph $G$ and a positive integer $t$, the goal is to find a subgraph $H$ of $G$
with $||H||=t$ and $|H|$ minimum. 

We start by drawing a connection between the rank reduction problem on transversal matroids and the
 minimum $t$-edge subgraph problem. Then we will extend the connection to the densest $k$-subgraph problem.
\begin{lemma}
\label{lem:to_t_edge}
For each constant $\eps$ with $0 < \eps < 1/2$, 
every $O(m^{\eps})$-approximation algorithm for the rank reduction problem on transversal matroids 
with $m$ elements can be turned into an $O(n^{2\eps})$-approximation algorithm for the minimum $t$-edge 
subgraph problem on graphs with $n$ vertices.  
\end{lemma}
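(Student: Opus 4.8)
The plan is to reduce the minimum $t$-edge subgraph problem to the rank reduction problem on transversal matroids via an ``incidence matroid with inflated vertices.'' Given an instance $(G,t)$ with $|G|=n$, I would build a bipartite model $H$ whose ground set $A$ consists, for every vertex $v$ of $G$, of a cloud $A_v$ of exactly $n$ parallel copies of $v$, and whose other side $B$ has one node per edge of $G$; the edge node $e=uv$ is joined to every copy in $A_u\cup A_v$. Then $m=|A|=n^2$, so that the hypothesised guarantee $O(m^{\eps})$ equals $O(n^{2\eps})$. Since each cloud has $n>\deg_G(v)$ copies, a routine Hall argument gives $r(A)=\mu(H)=||G||$, and more generally shows that deleting the \emph{full} clouds of a vertex set $W$ leaves unmatchable exactly the edges of $G[W]$ (every edge with an endpoint outside $W$ can still be matched into that endpoint's cloud); hence such a deletion reduces the rank by precisely $||G[W]||$. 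Setting the rank-reduction parameter $k:=t$, this yields the easy direction: applying it to an optimal min $t$-edge solution $W^{\dagger}$ gives $\mathrm{OPT}_{\mathrm{rr}}\le n\cdot\mathrm{OPT}_{\mathrm{te}}$.

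For the reverse direction I would run the assumed $O(m^{\eps})$-approximation on $(H,t)$ to get $X\subseteq A$ with $r(A\setminus X)\le ||G||-t$ and $|X|=O(n^{2\eps})\cdot\mathrm{OPT}_{\mathrm{rr}}=O(n^{1+2\eps})\cdot\mathrm{OPT}_{\mathrm{te}}$. The naive recovery — returning the vertices whose clouds $X$ touches — fails, since a near-optimal $X$ need not delete whole clouds and could touch as many as $|X|$ clouds lightly, losing a factor $n$. Instead, writing $N_{H-X}(\cdot)$ for neighbourhoods in $H-X$, I would use the defect form of K\"onig's theorem, $\mu(H-X)=||G||-\max_{F\subseteq E}\bigl(|F|-|N_{H-X}(F)|\bigr)$, and compute by a single max-flow in $H-X$ a maximiser $F^{*}$; the relation $r(A\setminus X)=\mu(H-X)\le ||G||-t$ forces $|F^{*}|-|N_{H-X}(F^{*})|\ge t$. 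I then output $W:=V(F^{*})$.

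The crux — and the step I expect to be the main obstacle — is bounding $|W|$, i.e.\ certifying that the Hall-violating set read off from $X$ is genuinely small rather than merely feasible. Setting $y_w:=|A_w\cap X|$, the neighbourhood of $F^{*}$ is exactly $\bigcup_{w\in W}(A_w\setminus X)$, of size $\sum_{w\in W}(n-y_w)$, while $|F^{*}|\le ||G[W]||$. The defect inequality then gives $\sum_{w\in W}(n-y_w)\le ||G[W]||-t$, so in particular $||G[W]||\ge t$ and $W$ is feasible for the min $t$-edge problem. Combining $\sum_{w\in W}y_w\le |X|$ with $||G[W]||\le\binom{|W|}{2}\le |W|^2/2$ yields $n|W|-|X|\le |W|^2/2$, and since $|W|\le n$ this rearranges to $|X|\ge\tfrac{n}{2}|W|$, that is $|W|\le 2|X|/n$.

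Combining the two directions, $|W|\le 2|X|/n\le O(n^{2\eps})\cdot\mathrm{OPT}_{\mathrm{te}}$; keeping any $t$ edges of $G[W]$ then produces a min $t$-edge subgraph of the same order, i.e.\ the claimed $O(n^{2\eps})$-approximation (the hypothesis $\eps<1/2$ simply ensures $n^{2\eps}=o(n)$, so the guarantee is non-trivial). The remaining routine points I would verify are the two Hall arguments (that $r(A)=||G||$ and that edges leaving $W$ stay matchable), the polynomial-time computability of the max matching and of the maximiser $F^{*}$ via max-flow, the equivalence of the ``exactly $t$ edges'' and ``at least $t$ induced edges'' formulations of $\mathrm{OPT}_{\mathrm{te}}$, and the degenerate cases $t>||G||$ (both problems infeasible), which are handled directly.
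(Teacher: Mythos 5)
Your proof is correct, and it uses the same core gadget as the paper --- blow each vertex of $G$ up into a cloud of $n$ interchangeable copies so that ``selecting a vertex'' costs $n$ matroid elements while edges are comparatively cheap --- but the two halves are handled differently. The paper additionally places a private copy $E'$ of $E(G)$ into the ground set $A$, which makes $r(A)=||G||$ immediate and, more importantly, lets it massage an arbitrary feasible solution $X$ with witness $Y$ into a \emph{canonical} one of size exactly $n|G_{Y}|+t$ and no larger than $X$; the exact identity $x^{*}=nj^{*}+t$ between the two optima then drives the arithmetic. You drop $E'$ (so you must check $r(A)=||G||$ and the effect of deleting whole clouds by Hall computations, which you do correctly), and instead of canonicalizing the solution you read the subgraph directly off a maximum-deficiency set $F^{*}$ and bound its order by the counting inequality $\sum_{w\in W}(n-y_{w})\leq ||G[W]||-t\leq \binom{|W|}{2}\leq n|W|/2$, yielding $|W|\leq 2|X|/n$. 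This is a clean alternative to the canonical-pair surgery: it concedes a factor $2$ that the paper's exact accounting avoids (harmless inside the $O(\cdot)$), but it needs only the two one-sided bounds $\mathrm{OPT}_{\mathrm{rr}}\leq n\cdot\mathrm{OPT}_{\mathrm{te}}$ and $|W|\leq 2|X|/n$ rather than an exact correspondence between optimal solutions, and it makes the polynomial-time recovery step (one max-flow to find $F^{*}$) explicit. Both arguments rest on the same deficiency form of Hall's/K\"onig's theorem and both deliver the stated $O(n^{2\eps})$ factor; I see no gap.
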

\begin{proof}
Let $G = (V, E)$ be an instance of the minimum $t$-edge subgraph problem. 
Let $n:= |V|$. 
Let $V_{1}, V_{2}, \dots, V_{n}$ be $n$ disjoint copies of $V$.
Let $E'$ be a disjoint copy of $E$.
Let $H$ be the bipartite graph with bipartition $(A,B)$ where
$A = V_{1} \cup V_{2} \cup \cdots \cup V_{n} \cup E'$ and $B=E$, 
and where $u\in A$ is adjacent to $v \in B$
if either $u$ corresponds to a vertex of $G$ that is incident to the edge corresponding to $v$ in $G$, 
or if $u$ and $v$ correspond to the same edge of $G$.

Let $r$ denote the rank function of the transversal matroid induced by $H$ on $A$; 
thus for $X\subseteq A$, $r(X)$ is the maximum size of a matching in $H[X \cup B]$.
Obviously, $r(A) = |E|$, since every $v\in B$ can be matched to its copy in $A$. 
Let $m:=|A|$ denote the number of elements of the transversal matroid.  
Now consider the rank reduction problem on this matroid with $k=t$.
Recall that a feasible solution is a subset $X \subseteq A$ 
such that $r(A \setminus X) \leq r(A) - t = |E| - t$.

As is well known,  
we have that $r(A \setminus X) \leq r(A) - t$ for $X \subseteq A$ 
if and only if there exists $Y \subseteq B$ such that $|N_{H}(Y) \setminus X| \leq |Y| - t$, 
where $N_{H}(Y)$ denotes the set of vertices of $H$ 
that have a neighbor in $Y$.\footnote{This is a consequence of Hall's Marriage Theorem, 
as we now explain for completeness.    
Add $t-1$ new vertices to the set $A$, yielding a set $A'$, and make each of them 
adjacent to every vertex in $B$. Let $H'$ be the bipartite graph obtained from $H$ in this
manner.  
Let $X \subseteq A$. Then every matching $M$ in $H \setminus X$ with 
$|M| \leq |B| - (t-1)$ can be extended to a matching $M'$ of $H' \setminus X$ 
with $|M'| = |M| + t -1$. Conversely, 
every matching $M'$ in $H' \setminus X$ with 
$|M'| \geq t-1$  yields a matching $M$ of $H \setminus X$ 
with $|M| \geq |M'| - (t -1)$ by discarding the at most $t-1$ edges of $M'$ incident to 
the vertices in $A' \setminus A$. Hence, $H \setminus X$ has a matching of size 
$|B| - (t-1)$---or equivalently, $r(A \setminus X) \geq r(A) - (t-1)$---if 
and only if $B$ can be completely matched in $H' \setminus X$. 
By Hall's theorem, the latter happens 
if and only if $|N_{H' - X}(Y)| \geq |Y|$ for every $Y \subseteq B$, 
which is equivalent to $|N_{H}(Y) \setminus X| \geq |Y| - (t-1)$ for every $Y \subseteq B$.    
Therefore, $r(A \setminus X)  \leq r(A) - t$ if and only if there exists $Y \subseteq B$ such that 
$|N_{H}(Y) \setminus X| \leq |Y| - t$. 
} 
Such a set $Y$ is said to be a {\DEF witness} for $X$. 
The set $Y$ defines in turn a corresponding subgraph $G_{Y}$ of $G$ consisting of all 
the edges of $G$ included in $Y$, and the vertices of $G$ incident to those edges. 
By definition of $H$, the set $N_{H}(Y)$ consists of the $n$ copies of each vertex of 
$G_{Y}$, along with the copies in $E'$ of each edge of $G_{Y}$. 
Observe that any set $X'$ obtained by taking the $n$ copies in $A$ of each vertex of $G_{Y}$ and
$t$ arbitrarily chosen edges of $G_{Y}$ in $A$ is such that 
$|N_{H}(Y) \setminus X'| = |Y| - t$. Moreover, since $|N_{H}(Y) \setminus X| \leq |Y| - t$ and 
$X' \subseteq N_{H}(Y)$, it follows that $|X'| \leq |X|$, 
that is, $X'$ is a solution of size no greater than $X$ and having the same witness $Y$. 
Such a pair $(X', Y)$ is called a {\DEF canonical pair}.

Now, if a canonical pair $(X, Y)$ is such that $|Y| > t$, then 
$N_{H}(Y) \setminus X$ 
consists of exactly $|Y| -t > 0$ edges of $G_{Y}$ (or more precisely, 
their copies in $E'$). 
For each $u\in N_{H}(Y) \setminus X$ with corresponding copy $v$ in $B$, 
we have that $X \setminus \{u\}$ is again a solution to the rank reduction problem, with 
witness $Y \setminus \{v\}$, and of size smaller than $X$.  

To summarize the above discussion, 
given an arbitrary set $X' \subseteq A$ such that $r(A \setminus X') \leq r(A) - t$, 
one can in polynomial time compute a canonical pair $(X, Y)$ with $|Y|=||G_{Y}||=t$ 
and $|X| = n|G_{Y}| + t \leq |X'|$.  

Conversely, for every subgraph $G' \subseteq G$ with $||G'|| = t$, there is a natural 
corresponding canonical pair $(X, Y)$, where $Y$ contains the copies in $B$ of the $t$ edges
of $G'$, and where $X = N_{H}(Y)$. 
Letting $x^{*}$ and $j^{*}$ 
denote the size of an optimal solution for the rank  reduction and minimum $t$-edge subgraph 
problems, respectively, it follows that $x^{*} = nj^{*} + t$. 

Now suppose that the rank reduction problem admits a $cm^{\eps}$-approximation algorithm, 
where $0 < \eps < 1/2$ and $c > 0$ are absolute constants. 
Letting $(X, Y)$ be a canonical pair with $|Y|=t$ obtained using this algorithm, we have 
$$
n|G_{Y}| + t = |X| \leq cm^{\eps}x^{*} = c(n^{2}+|E|)^{\eps}(nj^{*} + t) 
\leq c(2n^{2})^{\eps}(nj^{*} + t)
$$ 
and hence
$$
|G_{Y}| \leq \frac{c(2n^{2})^{\eps}(nj^{*} + t) - t}{n}
\leq
2cn^{2\eps}\left(j^{*} + \frac{t}{n}\right)
\leq
2cn^{2\eps}\left(j^{*} + j^{*}\right)
= 
4cn^{2\eps}j^{*}.
$$
(In the last inequality we used the fact that $(j^{*})^{2} \geq t$, and thus
$\frac{t}{n} \leq \frac{(j^{*})^{2}}{n} \leq \frac{j^{*}n}{n}=j^{*}$.)
Therefore, $G_{Y}$ is a $t$-edge subgraph whose order is within a $4cn^{2\eps}$-factor of
optimal. 
\end{proof}

As pointed out to us by an anonymous referee, the following lemma is implicit 
in the recent work of Chlamtac, Dinitz, and Krauthgamer~\cite{CDK12} 
(in~\cite{CDK12}, 
the minimum $t$-edge subgraph problem is called the {\DEF smallest $m$-edge subgraph} problem). 
We include a proof nevertheless, for completeness. 

\begin{lemma}
\label{lem:to_densest_k}
For each constant $\eps$ with $0 < \eps < 1/2$, 
every $O(n^{\eps})$-approximation algorithm for the minimum $t$-edge subgraph problem 
can be turned into an $O(n^{2\eps})$-approximation algorithm for the densest $k$-subgraph problem. 
\end{lemma}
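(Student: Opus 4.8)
The plan is to reduce the densest $k$-subgraph problem to the minimum $t$-edge subgraph problem, exploiting the fact that the two problems are, roughly speaking, inverse to one another: one fixes the number of vertices and maximises edges, while the other fixes the number of edges and minimises vertices. Throughout, let $G$ be the given densest $k$-subgraph instance on $n$ vertices, let $f(k)$ denote the maximum number of edges in a $k$-vertex subgraph of $G$ (the quantity we wish to approximate), and let $g(t)$ denote the minimum number of vertices in a $t$-edge subgraph of $G$. The only structural link needed is that an optimal densest $k$-subgraph solution, having $f(k)$ edges on $k$ vertices, is itself a feasible $t$-edge subgraph for $t := f(k)$; hence $g(f(k)) \le k$.

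First I would avoid guessing $f(k)$ and simply enumerate: for every $t \in \{1, \dots, \|G\|\}$, run the assumed $cn^{\eps}$-approximation for the minimum $t$-edge subgraph problem to obtain a subgraph $H_t$ with exactly $t$ edges and $|H_t| \le cn^{\eps}\, g(t)$ vertices. Since $\|G\|$ is polynomial in $n$ and the subroutine is polynomial, this loop runs in polynomial time. For the correct (unknown) value $t = f(k)$ we then have $|H_{f(k)}| \le cn^{\eps}\, g(f(k)) \le cn^{\eps}k$, i.e.\ a subgraph with the optimal number $f(k)$ of edges, but spread over only an $O(n^{\eps})$ factor more vertices than we are allowed.

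The crux is then to trim $H_t$ down to exactly $k$ vertices while retaining many edges; this is where the exponent doubles. If $|H_t| \le k$ we simply pad $H_t$ with arbitrary vertices and keep all $t$ edges. Otherwise I would invoke the standard averaging bound: a uniformly random $k$-subset of the $N := |H_t|$ vertices retains, in expectation, $t \cdot \binom{k}{2}/\binom{N}{2}$ edges, so some $k$-subset does at least this well. To make this constructive and deterministic, repeatedly delete a minimum-degree vertex until $k$ vertices remain; since deleting a minimum-degree vertex from a $j$-vertex graph destroys at most a $2/j$ fraction of its edges, a short telescoping computation shows that the surviving subgraph keeps at least $t \cdot \frac{k(k-1)}{N(N-1)}$ edges, matching the averaging bound. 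Call the resulting $k$-vertex subgraph $S_t$, and output the densest $S_t$ over all $t$.

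To finish, consider the term $t = f(k)$, where $N \le cn^{\eps}k$: there $S_{f(k)}$ carries at least $f(k)\cdot \frac{k(k-1)}{N(N-1)} \ge f(k)/\big(c^2 n^{2\eps}\big)$ edges (up to the routine handling of small $k$ and the $k(k-1)$ versus $k^2$ discrepancy), which is within an $O(n^{2\eps})$ factor of the densest $k$-subgraph optimum $f(k)$. I expect the only genuine obstacle to be this trimming step and its accounting: one must verify that min-degree deletion indeed realises the $(k/N)^2$ density-retention bound, and check the boundary cases ($k\le 1$, $|H_t|\le k$, and $t=f(k)$ exceeding $\|G\|$, which cannot occur). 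The loss of precisely the factor $(N/k)^2 \le c^2 n^{2\eps}$ here is what converts an $n^{\eps}$ guarantee for the minimum $t$-edge subgraph problem into an $n^{2\eps}$ guarantee for densest $k$-subgraph; the hypothesis $\eps < 1/2$ plays no role in the argument beyond keeping the resulting ratio sublinear.
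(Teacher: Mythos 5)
Your proposal is correct and follows the same overall reduction as the paper: run the minimum $t$-edge subgraph oracle for every $t$, observe that for the right value of $t$ the returned subgraph has at most $cn^{\eps}k$ vertices, and then extract a $k$-vertex subgraph at the cost of a $(N/k)^2 = O(n^{2\eps})$ loss in edge count. The two places where you diverge are both sound but worth noting. First, in selecting $t$: the paper fixes a single threshold index $t'$ (the largest $t$ with $|H_t|\le kcn^{\eps}$) and proves $z^*\le t'$ from the approximation guarantee, whereas you simply take the best output over all $t$ and argue only about the term $t=f(k)$; your version is slightly cleaner since it needs no monotonicity or threshold argument, at the price of having to run the extraction step for every $t$ rather than once. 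Second, in the extraction: the paper partitions $V(H_{t'})$ into $q\le 3cn^{\eps}$ blocks of size $\lfloor k/2\rfloor$ and applies pigeonhole over the $\binom{q}{2}$ pairs of blocks, while you use iterated minimum-degree deletion, whose telescoping product $\prod_{j=k+1}^{N}\frac{j-2}{j}=\frac{k(k-1)}{N(N-1)}$ exactly realizes the random-$k$-subset averaging bound. Both yield the same $\Theta(n^{2\eps})$ loss (your constant is $2c^2$ versus the paper's $9c^2$), and your telescoping identity is valid since $j\ge 3$ throughout once $k\ge 2$. The boundary cases you flag ($k\le 1$, $|H_t|\le k$ requiring padding) are handled the same way the paper handles them, so there is no gap.
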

\begin{proof}
Suppose that the minimum $t$-edge subgraph problem admits a $c'n^{\eps}$-approximation algorithm, which we denote 
$\mathcal{A}$, where $0 < \eps < 1$ and $c' \geq 1$ are absolute constants. 
Let $G$ be an instance of the densest $k$-subgraph problem. As before, we let $n$ and $m$ 
denote the number of vertices and edges of $G$, respectively.  
We may assume $n \geq k \geq 2$.  
Since $c' \geq 1$ and $n^{\eps} \geq 1$, there exists $c$ 
with $c' \leq c \leq 2c'$ such that $cn^{\eps}$ is an integer. We
will consider the approximation factor of $\mathcal{A}$ to be $cn^{\eps}$ in what follows, 
to avoid cumbersome floors and ceilings in the calculations.  

Run algorithm $\mathcal{A}$ on $G$  with $t = 1, 2, \dots, m$. Let $H_{i}$ be the $i$th 
subgraph returned by the algorithm. Clearly, we may suppose that $|H_{i}| \leq |H_{i+1}|$ 
for each $i\in \{1, \dots, m-1\}$. 

Let $z^{*}$ denote the number of edges in an optimal solution to the densest $k$-subgraph problem on $G$.  
If $|H_{m}| \leq kcn^{\eps}$ then let $t' := m$, otherwise let $t'$ be the index in 
$\{1, \dots, m-1\}$ such that 
$|H_{t'}| \leq kcn^{\eps}$ and $|H_{t'+1}| > kcn^{\eps}$. 
Since algorithm $\mathcal{A}$ is a $cn^{\eps}$-approximation algorithm, and since either $t'=m$ or
$|H_{t'+1}| > kcn^{\eps}$, it follows that every subgraph of $G$ with exactly $k$ vertices has 
at most $t'$ edges, that is, $z^{*} \leq t'$. 

Let $q:= \left\lceil |H_{t'}| / \lfloor k/2 \rfloor \right\rceil$.  
Observe that $2 \leq q \leq  3cn^{\eps}$. 
Let $V_{1}, V_{2}, \dots, V_{q}$ be a partition of the vertex
set of $H_{t}$ into $q$ subsets with $|V_{1}|=|V_{2}|=\cdots =|V_{q-1}|=\lfloor k/2 \rfloor$ 
and $|V_{q}| = |H_{t'}| - (q-1)\lfloor k/2 \rfloor \leq \lfloor k/2 \rfloor$. 
Let $(V_{i}, V_{j})$ be a pair with $i\neq j$ such that $||H_{t'}[V_{i} \cup V_{j}]||$ is maximized. 
By the pigeonhole principle, 
$$
||H_{t'}[V_{i} \cup V_{j}]|| \geq \frac{||H_{t'}||}{{q \choose 2}} = 
\frac{t'}{{q \choose 2}} \geq \frac{t'}{(3cn^{\eps})^{2}} \geq \frac{z^{*}}{9c^{2}n^{2\eps}}. 
$$  
If $|V_{i} \cup V_{j}| = k$, then let $H:=H_{t'}[V_{i} \cup V_{j}]$. If, on the other hand, 
$|V_{i} \cup V_{j}| < k$, then let $H:=H_{t'}[V_{i} \cup V_{j} \cup X]$ 
where $X$ is an arbitrary subset of $V(H_{t'}) \setminus (V_{i} \cup V_{j})$ of 
size $k - |V_{i} \cup V_{j}|$. 
Thus in both cases $|H| = k$ and $||H|| \geq z^{*} / 9c^{2}n^{2\eps}$. Hence,  
$H$ is a solution to the densest $k$-subgraph problem on $G$ whose number of edges is  
within a $9c^{2}n^{2\eps}$-factor of the optimum. 
\end{proof}

Combining Lemma~\ref{lem:to_t_edge} and~\ref{lem:to_densest_k} gives:

\begin{theorem}
\label{th:densest_k_subgraph}
For each constant $\eps$ with $0 < \eps < 1/4$, 
every $O(m^{\eps})$-approximation algorithm for the rank reduction problem on transversal matroids  
with $m$ elements can be turned into an $O(n^{4\eps})$-approximation algorithm for the densest $k$-subgraph problem on graphs with $n$ vertices.
\end{theorem}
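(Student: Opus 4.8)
The plan is simply to compose the two preceding lemmas, tracking the exponent of $n$ through the chain of reductions. First I would start from an $O(m^{\eps})$-approximation algorithm for the rank reduction problem on transversal matroids, where $0 < \eps < 1/4$. Since in particular $\eps < 1/2$, Lemma~\ref{lem:to_t_edge} applies directly and converts this algorithm into an $O(n^{2\eps})$-approximation algorithm for the minimum $t$-edge subgraph problem on $n$-vertex graphs.

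Next I would feed this minimum $t$-edge subgraph algorithm into Lemma~\ref{lem:to_densest_k}. The key point is that its approximation factor is $O(n^{2\eps})$, so in the statement of Lemma~\ref{lem:to_densest_k} the role of the exponent there is now played by $2\eps$. For that lemma to apply we need this exponent to lie in $(0, 1/2)$, i.e.\ $2\eps < 1/2$, which is exactly the hypothesis $\eps < 1/4$ of the theorem. Granting this, Lemma~\ref{lem:to_densest_k} yields an $O(n^{2\cdot(2\eps)}) = O(n^{4\eps})$-approximation algorithm for the densest $k$-subgraph problem on $n$-vertex graphs, as claimed.

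Finally, I would check that the parameter $n$ is consistent across the two reductions. In Lemma~\ref{lem:to_densest_k} the minimum $t$-edge subgraph algorithm is invoked on the densest $k$-subgraph instance $G$ itself, so the number of vertices $n$ is literally the same quantity in both lemmas; no rescaling of $n$ occurs, and the exponents therefore multiply cleanly to give $4\eps$.

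I do not anticipate a genuine obstacle here, since the argument is a pure composition of black-box reductions; the only point requiring care is the bookkeeping on the admissible range of $\eps$. Indeed, the threshold $1/4$ in the theorem is not arbitrary but is forced precisely by the requirement that the intermediate exponent $2\eps$ stay below the $1/2$ ceiling imposed by Lemma~\ref{lem:to_densest_k}.
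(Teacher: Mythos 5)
Your proposal is correct and matches the paper exactly: the paper derives Theorem~\ref{th:densest_k_subgraph} simply by composing Lemma~\ref{lem:to_t_edge} with Lemma~\ref{lem:to_densest_k}, and your bookkeeping (the intermediate exponent $2\eps$ must stay below the $1/2$ threshold of Lemma~\ref{lem:to_densest_k}, which is precisely why the theorem requires $\eps<1/4$, and the vertex count $n$ is unchanged since the $t$-edge subgraph algorithm is run on the densest $k$-subgraph instance itself) is exactly the right justification.
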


As discussed in the introduction, the best approximation algorithm for the 
densest $k$-subgraph problem currently known has an approximation ratio of $O(n^{1/4 + \delta})$ 
for any fixed $\delta > 0$~\cite{BCC10} and it is conjectured that the inapproximability of the problem is of a 
similar magnitude. It would be nice to obtain strong inapproximability bounds for the rank reduction problem that
do not rely on this conjecture. One approach may be to analyze hypergraphs
as the rank reduction problem in transversal matroids incorporates the hypergraph version of 
the minimum $t$-edge subgraph problem. That is, we wish to select as few vertices as possible that induce 
at least $t$ hyperedges. Perhaps surprisingly, little is known about this problem.
As far as we are aware, the only specific hardness result is NP-hardess due Vinterbo \cite{Vin02} 
who studied the problem in the context of making medical databases anonymous.

We conclude this section with a remark about the approximability of the minimum $t$-edge subgraph problem itself. 
Given the existence of a $O(n^{1/4 + \delta})$-approximation algorithm 
for the densest $k$-subgraph problem, in view of Lemma~\ref{lem:to_densest_k}  it is perhaps 
natural to wonder whether one could achieve a $O(n^{1/8 + \delta})$-approximation for the former problem. 
While this is still open as far as we know,  
Chlamtac {\it et al.}~\cite{CDK12} recently made progress in that direction by describing an 
algorithm for the minimum $t$-edge subgraph problem with an approximation ratio of $O(n^{3 - 2\sqrt{2} + \delta})= O(n^{0.1716 + \delta})$ for fixed $\delta > 0$.

\section{The Maximum Vertex Cover Problem in Bipartite Graphs}
\label{sec:mvc}

As we have seen, the rank reduction problem admits a 
fairly simple polynomial-time algorithm on partition matroids but becomes much 
harder on transversal matroids, in the sense that approximation algorithms offering  
good guarantees seem unlikely to exist.  
Another interesting generalization of the case of partition matroids is to 
consider the intersection of two partition matroids. 

As is well-known, the set of matchings of a bipartite graph $G=(V,E)$ with bipartition $(A, B)$ 
can be modeled as the family of common independent sets of two partition matroids 
$M_{1}$ and $M_{2}$ defined on $E$: Take $M_{1}$ to be the partition matroid with model 
$((E(u_{1}), 1), \dots, (E(u_{a}), 1))$ and $M_{2}$
the partition matroid with model $((E(v_{1}), 1), \dots, (E(v_{b}), 1))$, where 
$A=\{u_{1}, \dots, u_{a}\}$, $B=\{v_{1}, \dots, v_{b}\}$, and for $w\in V$ the set $E(w)$ 
denotes the set of edges incident to $w$. 
Hence, in this specific case the rank reduction problem on $M_{1} \cap M_{2}$ amounts to 
finding a subset $F$ of edges of $G$ of minimum size such that  
$\mu(G-F) \leq \mu(G) - k$. 
In this section we show that this problem is NP-hard. More accurately,  
we show that a problem polynomially equivalent to it, the {\mvc} problem 
on bipartite graphs, is NP-hard; see Theorem~\ref{th:mvc_bip}. This solves an 
open problem of B.\ Simeone (see~\cite{EgresOpen}). 

The {\DEF {\mvc}} problem (also known as the {\DEF partial vertex cover} problem) 
is defined as follows: Given a graph $G=(V, E)$
and a positive integer $k \leq |G|$, find a subset $X$ of vertices
of $G$ with $|X|=k$ such that the number of edges covered by $X$ is maximized.
(An edge $e$ of $G$ is {\DEF covered} by $X$ if $e$ has at least one endpoint in $X$.) 

Now, if $G$ is bipartite, $t$ is a positive integer with $t \leq \mu(G)$, and $F$ is a
subset of edges of $G$ such that $\mu(G - F) \leq \mu(G) - t =: k$, 
then by K\H{o}nig's theorem $G - F$ has a vertex cover $X$ of size $k$, and hence $X$ covers at least
$|E| - |F|$ edges of $G$. (We remark that $X$ could cover some edges of $F$ too, and that 
$X$ can be computed in polynomial time given $F$.).    
Conversely, for every set $X \subseteq V$ with $|X| = k$, the set $F$ of edges of $G$ {\em not} covered 
by $X$ is such that $\mu(G - F) \leq k$. 
Therefore, for bipartite graphs, the {\mvc} problem is polynomially 
equivalent to that of finding a minimum-size set of edges decreasing the matching number 
by a prescribed amount. 

It should be noted that two recent works~\cite{DP2011, BGLMPP2012} with an overlapping set of authors  
claim that the NP-hardness of the {\mvc} problem on bipartite graphs 
can be derived directly from the reduction of Corneil and Perl~\cite{CP1984} 
showing that the densest $k$-subgraph problem is NP-hard on bipartite graphs. 
However, the argument relating the latter reduction to the {\mvc} problem, 
described explicitly in~\cite[Lemma~4]{DP2011}, is flawed.\footnote{As mentioned 
in~\cite[Lemma~3]{DP2011}, the {\mvc} problem in bipartite graphs is polynomially 
equivalent to the densest $k$-subgraph problem in complements of bipartite graphs. 
Thus one may equivalently consider the complexity of the latter problem. 
In the proof of Lemma~4 in~\cite{DP2011}, the authors point 
out that the reduction of~\cite{CP1984} implies
that the problem of finding a densest $k$-subgraph in the complement of a bipartite
graph with bipartition $(A, B)$ with $k = q + {q \choose 2}$ and with the extra requirement that 
it contains exactly $q$ vertices from $A$ and ${q \choose 2}$ vertices from $B$ is NP-hard. 
From this they wrongly conclude that the densest $k$-subgraph problem, without this extra constraint, 
is also NP-hard on complements of a bipartite graphs.  
(In fact, the instances obtained via the reduction in~\cite{CP1984} satisfy $|B| \geq k$, and thus 
a densest $k$-subgraph is trivially obtained by taking $k$ vertices in the clique $B$.) 
} 
We also mention that the proof of Theorem~1 in~\cite{BGLMPP2012}, showing that 
a related problem called the maximum quasi-independent set problem
is NP-hard on bipartite graphs, relies on the assumption that the
{\mvc} problem is NP-hard on bipartite graphs. Thus 
our result also fills a gap in that proof. 

\begin{theorem}
\label{th:mvc_bip}
The {\mvc} problem is NP-hard on bipartite graphs.
\end{theorem}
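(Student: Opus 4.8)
The plan is to establish NP-hardness by a polynomial-time reduction from the \emph{clique} problem (equivalently, from the densest $k$-subgraph problem asking whether some $q$ vertices of a graph induce all $\binom{q}{2}$ possible edges), which is one of the canonical NP-hard problems. The guiding reformulation is that the {\mvc} problem is exactly a \emph{sparsest subgraph} problem: choosing a set $X$ of $k$ vertices covering as many edges as possible is the same as choosing the complementary set $S = V \setminus X$ of $|V| - k$ vertices so as to \emph{minimize} the number of edges of $G[S]$, since every edge is either covered by $X$ or lies entirely inside $S$. Equivalently, by the matching formulation established above, it is the problem of deleting few edges so as to drop the matching number of a bipartite graph. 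I would carry the whole argument in the sparsest-subgraph language, since it makes the intended optimum transparent.

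First I would fix an instance $(G, q)$ of clique, with $G = (V,E)$ and $n = |V|$, and build a bipartite graph $H$ from it, the skeleton being an incidence-type construction: one side indexed by the vertices of $G$ and the other by the relevant pairs (edges, or non-edges, depending on whether one tracks density or sparsity), with the natural incidences as the edges of $H$. The target parameter $k$ of the {\mvc} instance would be chosen so that the complementary budget $|V(H)| - k$ equals the size of the vertex-plus-pair set one expects to select for a clique of size $q$. The intended correspondence is that an optimal {\mvc} solution leaves in its complement precisely the $q$ vertices of a clique together with the pairs certifying it, so that the deficit in covered edges (equivalently, the number of induced edges on the complement) attains a prescribed threshold if and only if $G$ has a $q$-clique.

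The main obstacle --- and the precise point at which the earlier argument of~\cite{DP2011} fails --- is that a naive incidence gadget lets the optimum \emph{cheat}: because covering edges rewards high-degree choices, an unconstrained optimal {\mvc} solution need not respect the intended split between the two sides of $H$ (for instance it may simply grab an entire cheap side, exactly the degenerate behaviour noted in the footnote above). So the real work is to force the optimum into canonical form. I would do this by padding the construction with dummy vertices and duplicated (twin) vertices that equalize degrees and make any deviation from the intended vertex-versus-pair split strictly suboptimal, and then prove a structural lemma stating that every optimal (or sufficiently good) solution can be transformed, without increasing its cost, into one of canonical shape, from which a $q$-clique can be read off directly. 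Verifying this structural lemma, together with the easy forward direction (a clique yields a solution meeting the threshold) and the polynomiality of all gadget sizes, completes the reduction; I expect the structural and padding step to absorb essentially all of the difficulty, with the remaining counting being routine.
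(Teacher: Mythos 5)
Your outline points in the right general direction --- a reduction from Clique via an incidence-type bipartite gadget, together with the correct diagnosis that the whole difficulty lies in preventing the optimum from ``cheating'' (which is exactly where the argument of~\cite{DP2011} breaks down). But as written it has a genuine gap: you never actually specify the gadget, the budget $k$, or the threshold, and you defer the entire canonicalization step to an unproved ``structural lemma'' obtained by ``padding with dummy and twin vertices that equalize degrees.'' That deferred step is the theorem; asserting that some padding will make every deviation suboptimal is not evidence that such a padding exists, and degree-equalization by itself does not obviously rule out the degenerate solutions you are worried about (e.g.\ a solution that takes many pair-vertices and few vertex-vertices, or that splits a gadget in an unintended way).

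For comparison, the paper's construction does not equalize degrees at all. It attaches a two-vertex gadget $\{a_x,b_x\}$ with an edge $a_xb_x$ to every vertex \emph{and} every edge $x$ of the Clique instance $H$, joins edge-gadgets to the gadgets of their endpoints in a crosswise fashion ($a_{uv}b_u$, $b_{uv}a_u$, etc.), and sets $k=|H|+\|H\|-\binom{\l}{2}+\l$. The canonicalization is then a concrete exchange argument (Claim~\ref{claim:nice}) showing one may assume every $a_u$ is chosen, $b_u$ is chosen only together with $a_u$, and no edge-gadget is fully chosen; and the quantitative heart is a small integer program (Lemma~\ref{lem:IP}) proving that the number of uncovered edges, $e_1+2e_2+3e_3$, is minimized exactly when the doubly-selected vertices form an $\l$-clique. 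Your proposal contains no analogue of either the exchange argument or this counting lemma, so it stops precisely where the proof has to begin. To repair it you would need to exhibit an explicit gadget and prove both the structural claim and the extremal counting claim for it.
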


Before proving Theorem~\ref{th:mvc_bip}, we need to introduce a technical lemma.

\begin{lemma}
\label{lem:IP}
Let $\l$ be an integer with $\l \geq 6$. Then the integer program
$$
\begin{array}{ll}
\textrm{\em \bf minimize} \quad &  x + 2y + 3z   \\[1ex]
\textrm{\em s.\ t.}  & \displaystyle  x + y + z - s = {\l \choose 2} - \l   \\[0.7ex]
& x \leq \displaystyle {s \choose 2}   \\[1.1ex]
& \displaystyle x, y, z, s \in \N
\end{array}
$$
has a unique optimal solution given by $x = {\l \choose 2}, y = 0, z = 0, s = \l$.
\end{lemma}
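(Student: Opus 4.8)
The plan is to eliminate $x$ using the equality constraint and reduce the whole program to a one-variable optimization in $s$. Write $N := \binom{\l}{2} - \l$, so the equality constraint reads $x = N + s - y - z$. Substituting into the objective gives
$$ x + 2y + 3z = (N + s - y - z) + 2y + 3z = N + s + y + 2z. $$
Since $N$ is a constant, minimizing the objective is the same as minimizing $s + y + 2z$. Under the substitution, the requirement $x \geq 0$ becomes $y + z \leq N + s$, and the inequality $x \leq \binom{s}{2}$ becomes $y + z \geq N + s - \binom{s}{2}$.

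Next I would optimize over $y$ and $z$ for each fixed $s$. Because $z$ carries the larger coefficient in $y + 2z$, the cheapest way to satisfy the lower bound on $y + z$ is to take $z = 0$ and $y = \max\bigl\{0,\ N + s - \binom{s}{2}\bigr\}$; the upper bound $y + z \leq N + s$ is never binding when we minimize. This collapses the problem to minimizing over $s \in \N$ the single-variable function
$$ F(s) := N + s + \max\Bigl\{0,\ N + s - \binom{s}{2}\Bigr\}, $$
and one checks immediately that $F(\l) = \binom{\l}{2}$, matching the claimed optimum (attained with $y = z = 0$ and $x = N + \l = \binom{\l}{2}$).

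The core of the argument is a case analysis on the sign of $N + s - \binom{s}{2}$. Since $s \mapsto \binom{s}{2} - s = s(s-3)/2$ is increasing for $s \geq 2$ and equals $N$ exactly at $s = \l$, one has $N + s \leq \binom{s}{2}$ precisely when $s \geq \l$. In that regime $F(s) = N + s$ is increasing, so $F(s) \geq F(\l) = \binom{\l}{2}$, with equality only at $s = \l$. For $s \leq \l - 1$ the max is the second term, so $F(s) = 2N + h(s)$ where $h(s) := 2s - \binom{s}{2} = s(5-s)/2$, and the desired inequality $F(s) \geq \binom{\l}{2}$ is equivalent to $h(s) \geq h(\l)$. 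As $h$ is concave, its minimum over the integer interval $\{0, 1, \dots, \l-1\}$ is attained at an endpoint, so it suffices to verify $h(0) > h(\l)$ and $h(\l-1) > h(\l)$.

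The step I expect to be the real obstacle — and where the hypothesis $\l \geq 6$ genuinely enters — is these two endpoint checks. A short computation gives $h(\l-1) - h(\l) = \l - 3$, strictly positive for $\l \geq 4$, and $h(0) - h(\l) = \binom{\l}{2} - 2\l = \l(\l-5)/2$, which is strictly positive precisely when $\l \geq 6$; note it vanishes at $\l = 5$, which is exactly why uniqueness must fail below the stated bound. By concavity these two inequalities yield $h(s) > h(\l)$ throughout $\{0, \dots, \l-1\}$, hence $F(s) > \binom{\l}{2}$ there. Combining with the case $s \geq \l$, the minimum value $\binom{\l}{2}$ is attained only at $s = \l$; at that point optimality forces $y + 2z = 0$, so $y = z = 0$ and $x = N + \l = \binom{\l}{2}$, establishing both optimality and uniqueness.
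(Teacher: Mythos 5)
Your proof is correct, and it takes a genuinely different route from the paper's. The paper argues by contradiction with a local exchange argument: it splits into four cases on $s$ ($s\le 2$, $3\le s<\l$, $s=\l$, $s>\l$) and in each case perturbs a putative optimal solution (e.g.\ incrementing $s$ and $x$ while decrementing $y+z$) to strictly decrease the objective. You instead eliminate $x$ via the equality constraint, observe that the objective becomes $N+s+y+2z$ so that for each fixed $s$ the inner minimum is $\max\bigl\{0,\,N+s-\tbinom{s}{2}\bigr\}$ attained at $z=0$, and then minimize the resulting one-variable function $F(s)$ globally using monotonicity for $s\ge\l$ and concavity of $h(s)=s(5-s)/2$ with two endpoint checks for $s\le\l-1$. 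Your approach buys a cleaner global picture: it isolates exactly where the hypothesis $\l\ge 6$ is used (the check $h(0)>h(\l)$, i.e.\ $\tbinom{\l}{2}>2\l$) and explains why uniqueness fails at $\l=5$ (where $s=0$, $x=0$, $y=\tbinom{5}{2}-5$ ties the claimed optimum), something the paper's case analysis does not make visible. One cosmetic point: your monotonicity claim for $\tbinom{s}{2}-s$ is stated only for $s\ge 2$, so the values $s\in\{0,1\}$ need the (trivial) separate observation that $\tbinom{s}{2}-s\le 0<N$ there; this does not affect correctness since your concavity argument over $\{0,\dots,\l-1\}$ covers those points anyway.
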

\begin{proof}
The proof is a straightforward case analysis.
Consider an optimal solution $x,y,z,s$ to the integer program and, arguing by contradiction, assume
it differs from the solution described above.
Let $f(x,y,z) := x + 2y + 3z$.

{\bf Case 1: $s \leq 2$}. We have $x \leq s$  
and thus
$$
y + z = {\l \choose 2} -\l + s - x 
\geq {\l \choose 2} -\l. 
$$
It follows that $f(x,y,z) \geq 2(y+z) \geq 2{\l \choose 2} - 2\l$.
But $2{\l \choose 2} - 2\l > {\l \choose 2}$ since $\l \geq 6$, contradicting the optimality
of the solution.

{\bf Case 2: $3 \leq s < \l$}. 

$$
y + z = {\l \choose 2} -\l + s - x 
\geq  {\l \choose 2} -\l +s -{s \choose 2} 
= \left({\l \choose 2} -  {s \choose 2}\right) - (\l-s) 
\geq 2.
$$
Here, the last inequality follows from the fact that ${s+1 \choose 2} - {s \choose 2} \ge 3$.

Now, increment $s$ by $1$, $x$ by $3$, and decrement $y$ and $z$
in such a way that they remain non-negative integers and that the sum $y+z$ decreases by exactly $2$.
The modified solution is still feasible and $f(x,y,z)$ decreases by at least $1$, a contradiction.

{\bf Case 3: $s = \l$}. Then $y + z \geq 1$, since otherwise we would have the solution described in the lemma statement. 
It follows that $x \leq {\l \choose 2} - 1$. Thus we can increment $x$ by $1$ and
decrement by $1$ a positive variable among $y,z$. This strictly decreases $f(x,y,z)$, a contradiction.

{\bf Case 4: $s > \l$}. Then $x+y+z={\l \choose 2} +s -l > {\l \choose 2}$. But $x\le {\l \choose 2} < {s \choose 2}$, otherwise 
the solution cannot be minimum. Therefore $y+z\ge 1$. Thus we improve the solution by incrementing $x$ by $1$ and
decrementing by $1$ a positive variable among $y,z$. 
\end{proof}

Now we may turn to the proof of Theorem~\ref{th:mvc_bip}.

\begin{proof}[Proof of Theorem~\ref{th:mvc_bip}.]
The reduction is from the NP-complete problem 
{\DEF Clique}: Given a graph $H$ and an integer $\l$, decide whether $H$ contains a
clique on $\l$ vertices or not. We may assume $\l \geq 6$ (otherwise, we simply 
check the existence of an $\l$-clique by brute force). We may also suppose that $H$
has minimum degree at least $2$. Indeed, a vertex with degree at most $1$ cannot be part of 
an $\l$-clique, and thus those vertices can iteratively be removed from the graph. 
Finally, we assume that 
$||H|| \geq |H| + {\l \choose 2}$. This last assumption can also
be made without loss of generality. Indeed, if $||H||$ is too small then
one can simply consider the disjoint union of $H$ with a large enough $3$-regular graph; 
since $\l \geq 6$ no vertex from this new $3$-regular component can be part of an $\l$-clique.

We build an instance $(G, k)$ of the {\mvc} problem as follows. 
First, create two adjacent vertices $a_{v}$ and $b_{v}$ for every vertex $v\in V(H)$, and
similarly two adjacent vertices $a_{e}$ and $b_{e}$ for every edge $e \in E(H)$.
Next, for every edge $uv \in E(H)$, add the edges $a_{uv}b_{u}$, $b_{uv}a_{u}$
and $a_{uv}b_{v}$, $b_{uv}a_{v}$. Finally, let 
$$
k:= |H| + ||H|| - {\l \choose 2} + \l.
$$
Observe that $G$ is bipartite with bipartition
$$
\big(\{a_{x}: x\in V(H) \cup E(H)\}, \{b_{x}: x\in V(H) \cup E(H)\}\big).
$$
See Figure~\ref{fig} for an illustration of the construction. 

\begin{figure}
\centering
\includegraphics[width=0.7\textwidth]{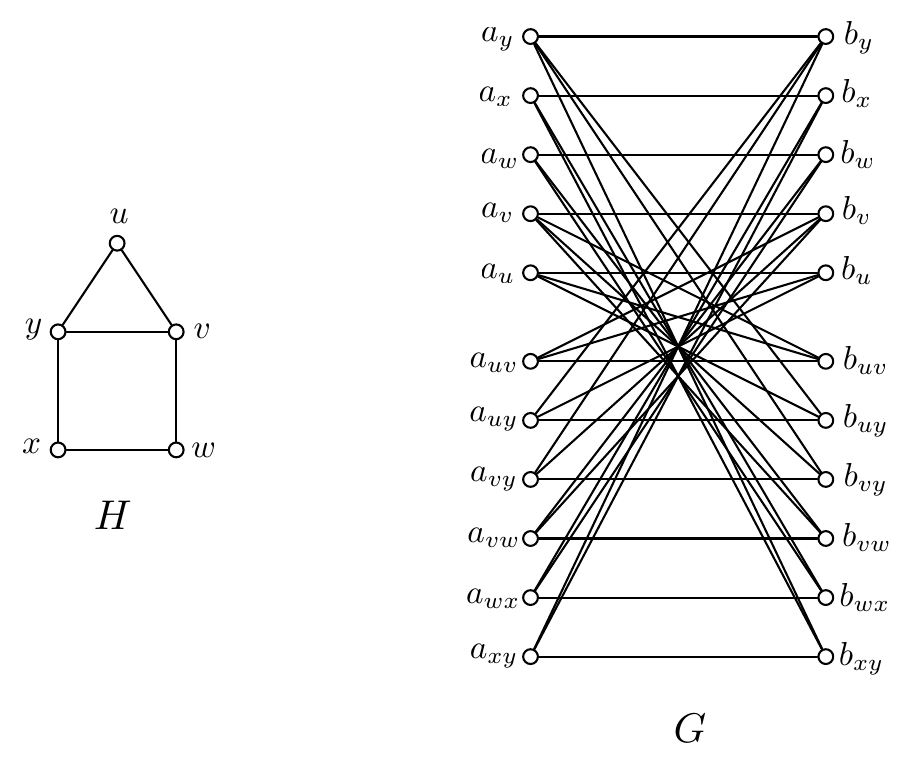}
\caption{\label{fig}Illustration of the construction of the bipartite graph $G$. 
(We note that this graph $H$ cannot be a valid instance of the problem because $\ell \geq 6$ 
and thus $H$ cannot satisfy $||H|| \geq |H| + {\l \choose 2}$; our aim here 
is only to illustrate how $G$ is obtained from $H$ on a small example.)}
\end{figure}

A feasible solution for this instance of the {\mvc} problem is a 
subset $X$ of vertices of $G$ with $|X|=k$, which we call a {\DEF {\pvc}} for short.
We let $c(X)$ denote the number of edges covered by such a set $X$. 
Let $OPT$ denote the maximum of $c(X)$ over every {\pvc} $X$ of $G$.

A {\pvc} $X$ of $G$ is {\DEF nice} if 
$$
X \cap \{a_{u}, b_{u}\} \in \big\{\{a_{u}\}, \{a_{u}, b_{u}\}\big\}
$$
for every $u\in V(H)$ and
$$
X \cap \{a_{e}, b_{e}\} \in \big\{\varnothing, \{a_{e}\}\big\}
$$
for every $e\in E(H)$.

\begin{claim}
\label{claim:nice}
Given a {\pvc} $X$ of $G$, one can find a nice {\pvc} $X'$ of $G$
with $c(X') \geq c(X)$.
\end{claim}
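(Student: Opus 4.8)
My plan is to transform $X$ into a nice partial vertex cover by a short sequence of coverage-non-decreasing, size-preserving exchanges, after first recording how each vertex of $G$ contributes to $c$. For $e=uv\in E(H)$ the graph $G$ carries the gadget edges $a_ub_u$, $a_vb_v$, $a_eb_e$ together with the four cross edges $a_eb_u,a_eb_v$ (call them \emph{$A$-edges}) and $b_ea_u,b_ea_v$ (\emph{$B$-edges}). Thus $a_u$ covers $a_ub_u$ and the $B$-edges $b_ea_u$ with $e\ni u$, while $b_u$ covers $a_ub_u$ and the $A$-edges $a_eb_u$; symmetrically $a_e$ covers $a_eb_e$ and its two $A$-edges, and $b_e$ covers $a_eb_e$ and its two $B$-edges. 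The key bookkeeping device is that each incidence $(u,e)$ with $u\in e$ owns exactly one $A$-edge $a_eb_u$ and exactly one $B$-edge $b_ea_u$, so I can account for the cross edges one incidence at a time.

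The first step is a single global flip: simultaneously replace the content of every vertex gadget equal to $\{b_u\}$ by $\{a_u\}$ and every edge gadget equal to $\{b_e\}$ by $\{a_e\}$. This is size-preserving, and it does not decrease $c(X)$. The gadget edges $a_ub_u,a_eb_e$ stay covered (a flip only moves the covering endpoint), so it suffices to check the cross edges incidence by incidence, and for a fixed incidence $(u,e)$ a finite check over the relevant states of the two gadgets shows that the number of covered edges among $\{a_eb_u,\,b_ea_u\}$ never drops: the only danger, losing the $A$-edge when $b_u$ leaves, is always compensated by gaining the $B$-edge when $a_u$ enters. After this flip no gadget equals $\{b_u\}$ or $\{b_e\}$, so the only remaining obstructions to niceness are empty vertex gadgets ($\varnothing$) and full edge gadgets ($\{a_e,b_e\}$).

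It remains to remove those two defects while preserving $|X|=k$ and not decreasing $c$. Whenever a full edge gadget and an empty vertex gadget coexist I swap out a $b_e$ and swap in an $a_{u_0}$: if the full gadget $e=uv$ has an empty endpoint $u$, I move $b_e$ to $a_u$ (the $B$-edge $b_ea_u$ is recaptured by $a_u$, and the newly covered $a_ub_u$ dominates the at most one loss at the other endpoint), and otherwise both endpoints already carry their $a$-vertex, $b_e$ is redundant, and I move it to an arbitrary empty $a_{u_0}$. Each such swap removes one defect of each type, so eventually one defect type is exhausted. If no empty vertex gadget survives, then every $a_u$ is present, each surviving $b_e$ is redundant and may be deleted, and I pad back to size $k$ by niceness-preserving additions (an $a_e$ on an empty edge gadget, or a $b_u$ on a gadget already containing $a_u$); such room always exists because $k<|H|+||H||$, the latter being the size of the matching of $G$ formed by all gadget edges and hence a lower bound on any vertex cover of $G$, so $X$ can never cover all of $E(G)$. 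If instead no full edge gadget survives while empty vertex gadgets remain, I fill each by a swap $a_e\mapsto a_{u_0}$ drawn from an $\{a_e\}$ gadget: adding $a_{u_0}$ covers $a_{u_0}b_{u_0}$ together with all $\deg_H(u_0)\geq 2$ of its $B$-edges (none previously covered, since no $b_f$ is present), which outweighs the at most three edges lost by deleting $a_e$.

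The step I expect to be most delicate is exactly this last accounting, together with the fully degenerate regime in which every edge gadget is empty: there the only available budget sits on the $b_u$'s, and a naive swap $b_{u_1}\mapsto a_{u_0}$ can lose more $A$-edges than it gains. The resolution I will pursue is to lean on the two standing hypotheses $\delta(H)\geq 2$ and $||H||\geq |H|+\binom{\l}{2}$ (which force $k$ to be large relative to the vertex gadgets and make empty vertex gadgets genuinely cheap to repair) together with the bound $k<|H|+||H||$, in order to exhibit in every case a repairing exchange with non-negative net effect; equivalently, to show that the surplus regime, where empty vertex gadgets outnumber the repairable resources, occurs only for configurations $X$ of structurally low coverage that any canonical nice set already dominates. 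Assembling the phases then yields a nice $X'$ with $|X'|=k$ and $c(X')\geq c(X)$, as required.
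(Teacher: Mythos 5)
Your overall strategy --- flip every singleton $\{b_x\}$ to $\{a_x\}$ so that $b_x\in X$ always forces $a_x\in X$, then repair full edge gadgets and empty vertex gadgets by local, size-preserving exchanges --- is essentially the proof in the paper, and the individual exchanges you describe (moving $b_e$ onto $a_u$ when an endpoint of a full edge gadget is empty, discarding a redundant $b_e$ when both endpoints already carry their $a$-vertex, and trading an $\{a_e\}$ gadget for an $a_{u_0}$ using $\delta(H)\geq 2$ to cover at least three new edges against at most three lost) all check out with the accounting you give. The one place the argument is not finished is exactly the place you flag yourself: the ``fully degenerate regime'' in which empty vertex gadgets remain but no $\{a_e\}$ gadget is available to raid. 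You name the hypotheses you intend to lean on but do not carry out the step, and the fallback you contemplate (a swap $b_{u_1}\mapsto a_{u_0}$) indeed cannot be made to work by local accounting alone.

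The missing observation is that this regime is vacuous, so no new exchange is needed. Since $||H||\geq |H|+\binom{\l}{2}$, you have $k=|H|+||H||-\binom{\l}{2}+\l\geq 2|H|+\l>2|H|$. After your flip and the elimination of full edge gadgets, every edge gadget is $\varnothing$ or $\{a_e\}$ and every vertex gadget contributes at most two elements; if $j$ vertex gadgets are empty, the vertex gadgets hold at most $2(|H|-j)$ elements of $X$, so at least $k-2(|H|-j)\geq \l+2j>j$ edge gadgets are of the form $\{a_e\}$. Hence as long as an empty vertex gadget survives there is always an $\{a_e\}$ gadget to raid, and the supply cannot run out before the empty vertex gadgets do. This is precisely how the paper closes the argument: it notes $|X|>2|H|$ at each such step, which together with the $a$-property produces the required $a_{e'}$ with $b_{e'}\notin X$. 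With this one counting remark inserted, your proof is complete and coincides with the paper's up to the order in which the two defect types are repaired.
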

\begin{proof}
First we define a {\pvc} $\tilde X$ based on $X$ which is close to being nice:
Let 
\begin{align*}
\tilde X := \, &
\big\{a_{u}, b_{u}: u\in V(H), a_{u}, b_{u} \in X \big\} \cup
\big\{a_{u}: u\in V(H), |\{a_{u}, b_{u}\} \cap X| = 1 \big\}  \\
& \cup
\big\{a_{e}, b_{e}: e\in E(H), a_{e}, b_{e} \in X \big\} \cup
\big\{a_{e}: e\in E(H), |\{a_{e}, b_{e}\} \cap X| = 1 \big\}.
\end{align*}

By construction $|\tilde X| = |X|$. 
Clearly, an edge $a_{x}b_{x}$ with $x\in V(H) \cup E(H)$ is covered by $\tilde X$
if and only if it is covered by $X$. Also, given a pair $(u, e)$ of vertex
$u \in V(H)$ and edge $e\in E(H)$ such that $u$ is incident to $e$ in $H$,
the set $\tilde X$ covers at least as many edges 
in $\{a_{e}b_{u}, b_{e}a_{u}\}$ as $X$ (though not necessarily the same ones).
It follows that $c(\tilde X) \geq c(X)$.

A useful property of the set $\tilde X$ is that if $b_{x}\in \tilde X$ for some 
$x \in V(H) \cup E(H)$ then necessarily  $a_{x}\in \tilde X$. For simplicity
we call this property the {\DEF $a$-property} of $\tilde X$.

We need to introduce an additional definition.
An element $x\in V(H) \cup E(H)$ is said to be {\DEF bad} in a 
{\pvc} $Y$ of $G$
if either $x\in V(H)$ and $a_{x}, b_{x}\notin Y$ ($x$ is a bad vertex), 
or $x\in E(H)$ and $a_{x}, b_{x}\in Y$ ($x$ is a bad edge). 
Observe that $Y$ is nice if and only if $Y$ has the $a$-property
and there is no bad element. 

Suppose $e=uv$ is an edge of $H$ which is bad in $\tilde X$.
If $u$ or $v$ is also bad in $\tilde X$, say $u$,
then let 
$$
\tilde X' := \left(\tilde X - \{b_{e}\}\right) \cup \{a_{u}\}.
$$
We have $a_{e} \in \tilde X$, thus the edge $a_{e}b_{e}$ is still covered by $\tilde X'$.
Since $\tilde X'$ covers also $b_{e}a_{u}$, there is at most one edge incident to
$b_{e}$ in $G$ (namely, $b_ea_v$) which is not covered by $\tilde X'$. On the other hand, $\tilde X'$ covers
the previously uncovered edge $a_{u}b_{u}$. Hence, 
$\tilde X'$ is a {\pvc} with  $c(\tilde X') \geq c(\tilde X)$. Observe that $\tilde X'$
still has the $a$-property, and the edge $e$ is no longer bad in $\tilde X'$.

If, on the other hand, none of $u,v$ is bad in $\tilde X$, then $a_{u}, a_{v} \in \tilde X$ by the $a$-property.
Since $a_{e} \in \tilde X$, it follows that $c(\tilde X - \{b_{e}\}) = c(\tilde X)$.
There exists an element $x\in V(H) \cup E(H)$ such that $a_{x} \notin \tilde X$, because
$|\tilde X| = k = |H| + ||H|| - {\l \choose 2} + \l < |H| + ||H||$ (since $\l \geq 6$).
Let then 
$$
\tilde X' := \left(\tilde X - \{b_{e}\}\right) \cup \{a_{x}\}. 
$$
The set $\tilde X'$
is a {\pvc} with the $a$-property and with $c(\tilde X') \geq c(\tilde X)$. 
Moreover, the edge $e$ is no longer bad in $\tilde X'$.

Now apply iteratively the above modifications on $\tilde X$ as
long as there exists a bad edge. This results in
a {\pvc} $\widehat X$ with the $a$-property, without bad edges,
and with $c(\widehat X) \geq c(\tilde X)$.

Next we deal with bad vertices in $\widehat X$.
Suppose $u\in V(H)$ is such a vertex, that is, $a_{u}, b_{u} \notin \widehat X$.
Consider two edges $e,f$ incident to $u$ in $H$. 
(Recall that $H$ has minimum degree at least $2$.)
Since $|\widehat X| = k = |H| + ||H|| - {\l \choose 2} + \l$ and 
$||H|| \geq |H| + {\l \choose 2}$ by our assumption on $H$, we have
$| \widehat X| > 2|H|$. Together with the $a$-property of $\widehat X$,
it follows that $a_{e'} \in \widehat X$ for some edge $e'\in E(H)$
(possibly $e'=e$ or $e'=f$). Note that $b_{e'} \notin \widehat X$, because otherwise
$e'$ would be a bad edge for $\widehat X$. Let
$$
\widehat X' := \left(\widehat X - \{a_{e'}\}\right) \cup \{a_{u}\}.
$$
Since $a_{e'}$ has degree $3$ in $G$ we have $c(\widehat X - \{a_{e'}\}) \geq c(\widehat X) - 3$.
Furthermore, $b_e, b_f \notin \widehat X$ as there are no bad edges in  $\widehat X$.
Thus, the three edges $a_{u}b_{u}, b_{e}a_{u}, b_{f}a_{u}$ of $G$ were not covered
by $\widehat X$ but are covered by $\widehat X'$, so we have 
$c(\widehat X') \geq c(\widehat X)$. Similarly as before, the {\pvc} $\widehat X'$
has the $a$-property and one less bad vertex than $\widehat X$.
Therefore, iterating this procedure as long as there is a bad vertex,
we eventually obtain a {\pvc} $X'$ with $c(X') \geq c(\widehat X) \geq c(X)$
having the $a$-property and no bad element, as desired.
\end{proof}

Consider a nice {\pvc} $X$ of $G$. 
Let $S(X)$ be the set of vertices $u\in V(H)$ such that 
$a_{u}, b_{u} \in X$, and let $s(X) := |S(X)|$.
An edge $e=uv$ of $H$ 
satisfies exactly one of the following three conditions: 
\begin{enumerate}
\item $u,v \in S(X)$;
\item exactly one of $u, v$ belongs to $S(X)$,
\item $u,v \notin S(X)$.
\end{enumerate}
We say that edge $e$ is of {\DEF type $i$} ($i \in \{1,2,3\}$)
if $e$ satisfies the $i$th condition above {\em and} moreover $a_{e}, b_{e} \notin X$. 
(We will focus on edges $e$ of $H$ such that $a_{e}, b_{e} \notin X$ in what follows, 
which is why the other ones do not get assigned a type.) 
Let $E_{i}(X)$ be the set of edges of $H$ with type $i$, and let $e_{i}(X) := |E_{i}(X)|$.

\begin{claim}
\label{claim:E}
Let $X$ be a nice {\pvc}. 
Then
$$
c(X) = ||G|| - e_{1}(X) - 2e_{2}(X) - 3e_{3}(X).
$$
\end{claim}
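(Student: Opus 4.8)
The plan is to count, for a nice \pvc{} $X$, the number of edges of $G$ that are \emph{not} covered by $X$, and then use the identity $c(X) = ||G|| - (\text{number of uncovered edges})$. It therefore suffices to show that exactly $e_{1}(X) + 2e_{2}(X) + 3e_{3}(X)$ edges of $G$ are left uncovered by $X$.

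First I would record the basic structural consequences of niceness. Since $X \cap \{a_{u}, b_{u}\}$ is either $\{a_{u}\}$ or $\{a_{u}, b_{u}\}$, we have $a_{u} \in X$ for every $u \in V(H)$, and moreover $b_{u} \in X$ if and only if $u \in S(X)$. This immediately implies that every edge $a_{u}b_{u}$ with $u \in V(H)$ is covered, and that for every edge $uv \in E(H)$ the two edges $b_{uv}a_{u}$ and $b_{uv}a_{v}$ are covered as well, because $a_{u}, a_{v} \in X$. Hence every uncovered edge must be of the form $a_{e}b_{e}$, $a_{uv}b_{u}$, or $a_{uv}b_{v}$ for some edge $e = uv$ of $H$.

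Next I would analyze the five edges of $G$ attached to a fixed edge $e = uv$ of $H$, namely $a_{e}b_{e}$, $a_{uv}b_{u}$, $b_{uv}a_{u}$, $a_{uv}b_{v}$, $b_{uv}a_{v}$, splitting into the two cases permitted by niceness, according to whether $X \cap \{a_{e}, b_{e}\}$ equals $\{a_{e}\}$ or $\varnothing$. If $a_{e} \in X$, then $a_{uv} = a_{e} \in X$ covers $a_{e}b_{e}$, $a_{uv}b_{u}$ and $a_{uv}b_{v}$, while $a_{u}, a_{v} \in X$ cover the remaining two; so all five are covered and $e$ contributes no uncovered edge. If instead $a_{e}, b_{e} \notin X$, then $a_{e}b_{e}$ is uncovered, whereas $a_{uv}b_{u}$ is covered if and only if $b_{u} \in X$, i.e.\ $u \in S(X)$, and likewise $a_{uv}b_{v}$ is covered if and only if $v \in S(X)$. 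Counting the uncovered edges among these five then yields exactly $1$ when $e$ has type $1$ (both $u, v \in S(X)$), exactly $2$ when $e$ has type $2$, and exactly $3$ when $e$ has type $3$.

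Finally, summing these contributions over all edges of $H$, the total number of edges left uncovered by $X$ is $e_{1}(X) + 2e_{2}(X) + 3e_{3}(X)$, which gives the claimed identity. The argument is essentially bookkeeping, and the only point requiring care is the treatment of the cross edges $a_{uv}b_{u}$ and $a_{uv}b_{v}$ together with the observation that an edge $e$ with $a_{e} \in X$ leaves nothing uncovered; this is precisely why uncovered edges arise only from the edges $e$ with $a_{e}, b_{e} \notin X$, namely those that were assigned a type.
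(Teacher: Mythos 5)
Your proof is correct and follows essentially the same route as the paper's: both reduce to counting uncovered edges, observe that niceness forces $a_{u}\in X$ for all $u\in V(H)$ so that only the three edges incident to $a_{e}$ for an untyped-out edge $e$ can be uncovered, and then check that a type-$i$ edge contributes exactly $i$ uncovered edges. (Your write-up is in fact slightly more careful than the paper's, which contains a sign typo in its final count.)
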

\begin{proof}
As $X$ is nice, $a_v\in X$ for all $v\in V$. 
Therefore every edge of the form $a_vb_v$ or $a_vb_e$ is covered.
Also, for each edge $e=uv \in E(H) \setminus \cup_{1\leq i \leq3} E_{i}(X)$ we have that
$a_eb_e, a_eb_u, a_eb_v$ are all covered. Thus the only uncovered 
edges are incident to vertices $a_e$ where $e\in \cup_{1\leq i \leq3} E_{i}(X)$. 
Suppose $e\in \cup_{1\leq i \leq3} E_{i}(X)$ and let us consider 
which edges among the three edges $a_eb_e, a_eb_u, a_eb_v$ are covered by $X$. 
If $e \in E_{1}(X)$, then $X$ covers $a_eb_u$ and $a_eb_v$ but not $a_eb_e$. 
If $e \in E_{2}(X)$, then $X$ covers exactly one of $a_eb_u$, $a_eb_v$, and avoids $a_eb_e$. 
If $e \in E_{3}(X)$, then $X$ covers none of the three edges. 
Hence, the total number of edges not covered by $X$ is exactly 
$e_{1}(X) + 2e_{2}(X) + 3e_{3}(X)$. 
\end{proof}

\begin{claim}
\label{claim:cost}
Let $X$ be a nice {\pvc}. Then
$$
c(X) \leq ||G|| - {\l \choose 2},
$$
with equality if and only if $s(X) = \l$, $e_{1}(X)= {\l \choose 2}$, $e_{2}(X)=e_{3}(X)=0$.
\end{claim}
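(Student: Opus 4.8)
The plan is to read the claim through the lens of Lemma~\ref{lem:IP}. By Claim~\ref{claim:E} we have $c(X) = ||G|| - e_{1}(X) - 2e_{2}(X) - 3e_{3}(X)$, so proving both the inequality and its equality case is equivalent to showing that
$$
e_{1}(X) + 2e_{2}(X) + 3e_{3}(X) \geq {\l \choose 2},
$$
with equality precisely when $s(X)=\l$, $e_{1}(X)={\l \choose 2}$, and $e_{2}(X)=e_{3}(X)=0$. I would therefore set $x := e_{1}(X)$, $y := e_{2}(X)$, $z := e_{3}(X)$, and $s := s(X)$, verify that this quadruple is feasible for the integer program of Lemma~\ref{lem:IP}, and read off the conclusion from the uniqueness of its optimal solution. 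Since the optimal value there is $f\big({\l \choose 2},0,0\big) = {\l \choose 2}$, feasibility gives the inequality, and uniqueness gives the equality characterization in both directions (via Claim~\ref{claim:E}).

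First I would establish the linear constraint by a bookkeeping count of $|X|$. Because $X$ is nice, every $a_{u}$ with $u\in V(H)$ lies in $X$, exactly $s(X)$ of the vertices additionally contribute $b_{u}$, and no $b_{e}$ with $e\in E(H)$ lies in $X$. Writing $m$ for the number of edges $e$ with $a_{e}\in X$, we get $|X| = |H| + s(X) + m$. As $|X| = k = |H| + ||H|| - {\l \choose 2} + \l$, this yields $m = ||H|| - {\l \choose 2} + \l - s(X)$. The edges counted by $E_{1}(X), E_{2}(X), E_{3}(X)$ are exactly those with $a_{e}, b_{e}\notin X$, equivalently those with $a_{e}\notin X$ (since $b_{e}$ is never in $X$), of which there are $||H|| - m$. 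Hence
$$
e_{1}(X) + e_{2}(X) + e_{3}(X) = ||H|| - m = {\l \choose 2} - \l + s(X),
$$
which is precisely the first constraint $x + y + z - s = {\l \choose 2} - \l$.

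Next I would check the second constraint $x \leq {s \choose 2}$, which is where the structure of the gadget enters. A type-$1$ edge has both endpoints in $S(X)$, so $E_{1}(X)$ is a set of edges of $H$ with both ends in the $s(X)$-element vertex set $S(X)$; since $H$ is simple, there are at most ${s(X) \choose 2}$ such edges, giving $e_{1}(X) \leq {s(X) \choose 2}$. With both constraints verified and all four quantities nonnegative integers, $(x,y,z,s)$ is feasible for the integer program, so Lemma~\ref{lem:IP} gives $e_{1}(X) + 2e_{2}(X) + 3e_{3}(X) \geq {\l \choose 2}$, and equality forces $(x,y,z,s) = \big({\l \choose 2}, 0, 0, \l\big)$, i.e.\ $s(X)=\l$, $e_{1}(X)={\l \choose 2}$, $e_{2}(X)=e_{3}(X)=0$.

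I do not expect a serious obstacle, since the analytic work is already packaged in Lemma~\ref{lem:IP}; the only genuinely new ingredients are the cardinality identity and the observation that simplicity of $H$ yields $e_{1}(X)\leq {s(X) \choose 2}$. The single point requiring care is the count of $|X|$: one must use niceness to be sure that $a_{u}\in X$ for every $u\in V(H)$ and $b_{e}\notin X$ for every $e\in E(H)$, so that $s(X)$ and $m$ account for all elements of $X$ beyond the $|H|$ forced copies $a_{u}$.
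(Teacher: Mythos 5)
Your proposal is correct and follows essentially the same route as the paper: both reduce the claim to Lemma~\ref{lem:IP} via Claim~\ref{claim:E}, verify the constraint $x+y+z-s = \binom{\l}{2}-\l$ by counting $|X|$ using niceness, and obtain $x \leq \binom{s}{2}$ from the fact that type-$1$ edges have both endpoints in $S(X)$. Your write-up merely spells out the cardinality count in more detail than the paper does.
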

\begin{proof}
Let $x:= e_{1}(X)$, $y := e_{2}(X)$, $z := e_{3}(X)$ and $s:=s(X)$.
Then $c(X) = ||G|| - f(x,y,z)$, where $f(x,y,z) = x + 2y +3z$ by the previous claim. 

Every edge in $E_{1}(X)$ has its two endpoints in $S(X)$; hence,
\begin{equation}
\label{eq:cond1}
{s \choose 2} \geq x. 
\end{equation}
Also, 
$$
|H| + ||H|| - (x+y+z-s) =  |X| = k = |H| + ||H|| - \left({\l \choose 2} - \l\right),
$$
and thus 
\begin{equation}
\label{eq:cond2}
x+y+z-s = {\l \choose 2} - \l.
\end{equation}

Since $\l \geq 6$ and $x, y,z, s,\l$ are non-negative integers satisfying \eqref{eq:cond1}
and \eqref{eq:cond2}, by Lemma~\ref{lem:IP} we have
$f(x,y,z) \geq {\l \choose 2}$, or equivalently, 
$$
c(X) \leq ||G|| - {\l \choose 2}.
$$
Moreover, equality holds if and only if $x = {\l \choose 2}, y = 0, z = 0, s = \l$
by the same lemma.
\end{proof}

It follows from Claims~\ref{claim:nice} and \ref{claim:cost} that 
$$
OPT \leq ||G|| - {\l \choose 2}.
$$

If $H$ has an $\l$-clique $K \subseteq H$, then the subset $X \subseteq V(G)$
defined by
\begin{align*}
X:=\, & \{a_{u}, b_{u}: u\in V(K)\} 
\cup \{a_{u}: u\in V(H) - V(K)\}   \\
&\cup \{a_{e}: e\in E(H) - E(K)\}  
\end{align*}
is a {\pvc} of $G$ with $c(X) = ||G|| - {\l \choose 2}$, implying $OPT = ||G|| - {\l \choose 2}$.

Conversely, if $OPT = ||G|| - {\l \choose 2}$, then there exists a {\pvc} $X$ of $G$
with $c(X) = ||G|| - {\l \choose 2}$, and by Claim~\ref{claim:nice} we may
assume that $X$ is nice. From Claim~\ref{claim:cost} we then have
$s(X) = \l$, $e_{1}(X)= {\l \choose 2}$, $e_{2}(X)=e_{3}(X)=0$,
implying that $S(X)$ induces an $\l$-clique in $H$. 

Therefore, we can decide in polynomial time if $H$ has an $\l$-clique 
by checking if $OPT = ||G|| - {\l \choose 2}$. This concludes the proof.
\end{proof}

The NP-hardness of the {\mvc} problem on bipartite graphs motivates the search for 
non-trivial approximation algorithms for this class of graphs. A recent result in this direction is due to 
Apollonio and Simeone~\cite{APapprox}, who gave an LP-based $(4/5)$-approximation algorithm for bipartite graphs.  

\section*{Acknowledgments}
We are grateful to Attila Bernath and Tam\'as Kiraly for 
their comments on the proof of Theorem~\ref{th:mvc_bip}, and to the two anonymous referees for their helpful remarks and suggestions. 
We also thank Mohit Singh, Bill Cunningham and Anupam Gupta for interesting discussions, 
and Nicola Apollonio for providing us with a preliminary version of~\cite{APsub}.

\end{document}